\numberwithin{equation}{section}
\newtheorem{theorem}{Theorem}
\numberwithin{theorem}{section}
\numberwithin{lemma}{section}
\newtheorem{proposition}{\bf Proposition}
\newtheorem{rem}{Remark}
\numberwithin{corollary}{section}
\numberwithin{example}{section}
\begin{document}
\title{A more flexible counterpart of a Huang-Kotz's
 copula-type}
\author{H. M. Barakat${}^*,~$ M. A. Alawady\footnote{Department of Mathematics, Faculty of Science, Zagazig University, Zagazig 44519, Egypt. }, I. A. Husseiny${}^*$ and M. A. Abd Elgawad\footnote{Department of Mathematics, Faculty of Science, Benha University, Benha 13518, Benha, Egypt. The corresponding author: M. A. Abd Elgawad, E-mail: mohamed.abdelgwad@fsc.bu.edu.eg; mohamed\_salem240@yahoo.com}}
\date{}
\maketitle
\begin{abstract}
We propose a more flexible symmetric counterpart of the Huang-Kotz's  copula of the 1st type.
Both  the counterpart and Huang-Kotz's copula of the 1st type provide the same improvement
of the correlation level. Moreover, the proposed copula includes special cases of many other extensions of the Farlie-Gumbel-Morgenstern (FGM) copula.
\end{abstract}
{\bf  AMS 2010 Subject  Classification:} Primary 62H05; Secondary 62H20.\\
 {\bf Key Words:} FGM copula; Huang-Kotz FGM copula; iterated FGM copula; Spearman's Rho; Kendall's Tau.
\section{~Introduction}
A bivariate copula is a bivariate distribution function defined on the unit square $I^{2}=[0,1]^{2}, $ which  has uniform marginals on $[0, 1].$ Copula is mainly used to describe the dependence between random variables (RVs). Any bivariate function, $C(u,v),$ is a copula if and only if (cf. \cite{Nelsen06})
\begin{eqnarray}\label{eq(1.1)}
&&  C(u, 0)=C(0, v)=0,~ C(u, 1)=u,~C(u,v)\leq 1,~ C(1, v)=v,~\forall~ (u,v) \in I^{2},\\
&\mbox{and}&C(u_{2}, v_{2})-C(u_{2}, v_{1})-C(u_{1}, v_{2})+C(u_{1}, v_{1}) \geq 0,\label{eq(1.2)}\\
&&\forall~ (u_{i},v_{i}) \in I^{2},i=1,2,~ u_{1} \leq u_{2},~v_{1} \leq v_{2}.\nonumber
\end{eqnarray}
The  twice differentiable 2-increasing property (\ref{eq(1.2)}) can be replaced by the
condition
\begin{eqnarray}\label{eq(1.3)}
c(u, v)\geq 0,
\end{eqnarray}
where $c(u,v)$ is called copula density.

The FGM copula is defined by
\begin{eqnarray}\label{eq(1.4)}
C_{FGM}(u, v;a)=u v\{1+a(1-u)(1-v)\}, ~ a \in [-1,1],\,(u,v) \in I^{2}.
\end{eqnarray}
While the  copula (\ref{eq(1.4)}) is a flexible family and valuable in many applications, a well-known limitation of this copula  is the low dependence level it permits between RVs, Spearman's Rho $\rho \in[-0.333,0.333]$ and Kendall's Tau $\tau \in[-0.222,0.222].$
Several extensions to the FGM copula have been introduced in the literature, where most of these extensions aimed to improve the correlation level. Some examples of these important extensions are \cite{BairamovK02,BekrizadehJ17,BekrizadehEtal12,Durante06}. Recently, most of these extensions were studied in different important aspects by Abd Elgawad et al. \cite{Abd ElgawadEtal20b,Abd ElgawadEtal21a,Abd ElgawadEtal21b} and Alawady et al. \cite{AlawadyEtal21a,AlawadyEtal21b}.

Huang and Kotz \cite{HuangK99} proposed  two analogous extensions, the 1st HK-FGM  and 2nd HK-FGM types which are defined respectively by
\begin{eqnarray}\label{eq(1.5)}
C_{HK-FGM1}(u, v;a,b)&=&u v\left[1+a\left(1-u^{b}\right)\left(1-v^{b}\right)\right],~b>0, a \in \left[-\frac{1}{b^{2}},\frac{1}{b}\right],\\\label{eq(1.6)}
C_{HK-FGM2}(u, v;a,b)&=&u v\left[1+a(1-u)^{b}(1-v)^{b}\right], ~ b>1,  a  \in\left[-1,\left(\frac{b+1}{b-1}\right)^{b-1}\right],
\end{eqnarray}
with $\rho \in[-0.333,0.375]$ and $\rho \in[-0.333,0.391],$ respectively.  The difference between the maximal positive correlations of the two types (\ref{eq(1.5)}) and (\ref{eq(1.6)}) is  marginal. For more details about the Huang-Kotz copulas, see the recent works  \cite{Abd ElgawadEtal20a,Barakat19a}.

Ebaid et al. \cite{Ebaid20}  proposed the following new extended FGM copula
\begin{eqnarray}\label{eq(1.7)}
C(u,v)=uv\left[1+a(1-u)(1-v)(1-bu)(1-bv)\right],~ b  \in [0,1).
\end{eqnarray}
Ebaid et al. \cite{Ebaid20}  claimed that, the admissible range of the parameters is
\begin{eqnarray}\label{eq(1.8)}
 a \in \left[-1,\frac{1}{1-b}\right]~~\mbox{and}~~ b  \in [0,1).
\end{eqnarray}
Moreover, $\rho\in[-0.333, 1]$ and $\tau\in[-0.222, 1].$ Barakat et al. \cite{BarakatEtal21}  proved that the admissible  range (\ref{eq(1.8)}) and  the claim about the correlations are   wrong. Moreover, Barakat et al. \cite{BarakatEtal21} showed that the copula (\ref{eq(1.7)}), whenever $b  \in [0,1),$  does not increase the maximum positive correlations for the FGM copula (\ref{eq(1.4)}).

In this paper, we modify the copula (\ref{eq(1.7)}) by elongating  the range of the parameter $b$ to $[-\infty,\infty],$ where $\pm\infty$ is interpreted as $\lim\limits_{b\to\pm\infty}.$ The  modified copula  would be considered as a counterpart of the  Huang and Kotz's  copula of the 1st type defined by (\ref{eq(1.5)}), in the sense that both have two shape parameters and provide the same improvement  of  the positive correlation between the dependent variables compared with the FGM copula. However, the modified copula has  an evident preference than the copula (\ref{eq(1.5)}) because   the  new copula has a simpler functional form than the copula (\ref{eq(1.5)}), for being that in order to get it, as an extension of the classical FGM copula (\ref{eq(1.3)}), we used an extra shape parameter as a multiplicative factor, while to get the copula (\ref{eq(1.5)}),  the extra shape parameter is used  as an exponent. Besides this evident motivation,  the modified copula includes some special cases of other extensions of the FGM copula.

Moreover, Ebaid et al. \cite{Ebaid20} used the copula (1.7), with a wrong admissible range (cf. \cite{BarakatEtal21}),  to estimate the reliability in a dependent stress-strength model with an application to the Egyptian finance system. There is no doubt that this important application can not be benefited from as long as the admissible range of the  copula (\ref{eq(1.7)}) has not been determined or  was specified in a wrong way. Therefore, the result of this paper enables us to use this application.
\section{The main result}
The suggested modified copula of (\ref{eq(1.7)}) is defined by
\begin{eqnarray}\label{eq(2.1)}
C(u,v;a,b)=uv\left\{1+a(1-u)(1-v)(1+bu)(1+bv)\right\},~~ (u,v)~ \in I^{2},~b\in[-\infty,\infty].
\end{eqnarray}
Clearly, the  Spearman's Rho and Kendall's Tau of the copula (\ref{eq(2.1)}) can be determined by reversing the sign of $b$ in (17) and (18) in  Ebaid et al. \cite{Ebaid20}  as $\rho=\frac{a(2+b)^{2}}{12}$ and $\tau=\frac{a(2+b)^{2}}{18},$ respectively.

In the following theorem, we determine a subset  $\Omega$ of the admissible range of the copula (\ref{eq(2.1)}), on which the copula provides an improvement of the correlation coefficients $\rho$ and $\tau.$
\begin{theorem}\label{th2.1}
The set $\Omega$ is given by $\Omega=\Omega^+\cup\Omega^-,$ where
\begin{eqnarray}\label{eq(2.2)}
\Omega^+&=&\left\{(a,b): 0\leq b\leq 1,~ -\frac{1}{(1+b)^2}\leq a \leq\frac{1}{(1+b)};~\mbox{or}~ b>1, -\frac{1}{(1+b)^2}\leq a \leq \frac{1}{(1+b)^2}\right\},\nonumber\\
\Omega^-&=&\left\{(a,b): -2\leq b\leq 0,~  -1\leq a \leq0;~\mbox{or}~ b<-2, -\frac{1}{(1+b)^2}\leq a \leq \frac{1}{(1+b)^2}\right\}.
\end{eqnarray}
\end{theorem}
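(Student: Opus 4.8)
The plan is to reduce the 2-increasing requirement (\ref{eq(1.2)}) to the density condition (\ref{eq(1.3)}) and then to a one-dimensional extremal problem. Writing $g(t)=t(1-t)(1+bt)$, the copula (\ref{eq(2.1)}) factors as $C(u,v;a,b)=uv+a\,g(u)g(v)$, so that the boundary conditions in (\ref{eq(1.1)}) hold automatically (since $g(0)=g(1)=0$) and $c(u,v;a,b)=1+a\,g'(u)g'(v)$, where $g'(t)=1+2(b-1)t-3bt^{2}$. Hence $(a,b)$ yields a valid copula if and only if $1+a\,g'(u)g'(v)\ge 0$ for all $(u,v)\in I^{2}$. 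Putting $h=g'$, $M=\max_{t\in[0,1]}h(t)$ and $m=\min_{t\in[0,1]}h(t)$, and noting $M\ge h(0)=1>0$ while $m<0$ throughout the region of interest, the product $h(u)h(v)$ ranges over the interval with endpoints $mM$ and $\max(m^{2},M^{2})$; therefore the admissibility condition splits according to the sign of $a$ into
\[
-\frac{1}{\max(m^{2},M^{2})}\ \le\ a\ \le\ \frac{1}{-\,mM}\qquad(m<0),
\]
the upper (resp.\ lower) constraint being inactive when $a\le 0$ (resp.\ $a\ge 0$). This converts the forward inclusion $\Omega^{+}\cup\Omega^{-}\subseteq\Omega$ into the piecewise computation of $M$ and $m$ as functions of $b$.

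The second step is that computation. Since $h$ is a parabola with $h(0)=1$, $h(1)=-(1+b)$ and vertex $t^{*}=\frac{b-1}{3b}$ at which $h(t^{*})=1+\frac{(b-1)^{2}}{3b}$, I would distinguish the four ranges $0\le b\le1$, $b>1$, $-2\le b\le 0$, $b<-2$ appearing in (\ref{eq(2.2)}). For $0\le b\le1$ the vertex lies outside $[0,1]$, so $h$ is monotone with $M=1$, $m=-(1+b)$; then $\max(m^{2},M^{2})=(1+b)^{2}$ and $-mM=1+b$, giving $-\frac{1}{(1+b)^{2}}\le a\le\frac{1}{1+b}$, which is precisely $\Omega^{+}$ on $[0,1]$. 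For $b>1$ the vertex is an interior maximum, $M=1+\frac{(b-1)^{2}}{3b}$ and $m=-(1+b)$, so the negative-$a$ bound is again $-\frac{1}{(1+b)^{2}}$, while the positive-$a$ bound, using $M\le 1+b$, is recorded in the symmetric form $a\le\frac{1}{(1+b)^{2}}$. The range $-2\le b\le 0$ (where $|m|\le 1=M$ forces $\max(m^{2},M^{2})=1$, hence $a\ge-1$) and the range $b<-2$ (where the roles of $m$ and $M$ are interchanged) are handled symmetrically and reproduce $\Omega^{-}$.

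The third step is the reverse inclusion $\Omega\subseteq\Omega^{+}\cup\Omega^{-}$, i.e.\ sharpness of the boundaries: here I would exhibit the point of $I^{2}$ at which $c$ first vanishes. On the negative-$a$ side the density first fails at the corner $(1,1)$, where $c(1,1)=1+a(1+b)^{2}$, giving the sharp value $a=-\frac{1}{(1+b)^{2}}$ when $b\ge0$ or $b<-2$, and at $(0,0)$, where $c(0,0)=1+a$, giving $a=-1$ when $-2\le b\le 0$; on the positive-$a$ side, for $0\le b\le1$, it first fails at $(0,1)$, where $c(0,1)=1-a(1+b)$, giving $a=\frac{1}{1+b}$. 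Finally the improvement claim is read off from $\rho=\frac{a(2+b)^{2}}{12}$: maximizing over $\Omega^{+}$ gives $\sup\rho=\tfrac{3}{8}$ at $(a,b)=(\tfrac12,1)$ and minimizing over $\Omega$ gives $\inf\rho=-\tfrac13$ (attained as $b\to 0$), so the attainable range is $[-\tfrac13,\tfrac38]$, matching the HK-FGM1 copula (\ref{eq(1.5)}); the statement for $\tau$ then follows from $\tau=\tfrac23\rho$.

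I expect the main obstacle to be the positive-$a$ analysis in the regimes $b>1$ and $b<-2$, where the extremum of $h$ is attained in the interior and equals $1+\frac{(b-1)^{2}}{3b}$: the exact density bound $\frac{1}{(1+b)M}$ there is not symmetric, so one must justify passing to the cleaner bound $\frac{1}{(1+b)^{2}}$ recorded in (\ref{eq(2.2)}), check that it still lies inside the admissible region, and verify that it keeps $\rho$ below $\tfrac38$. The bookkeeping of which endpoint of $[0,1]$ supplies $M$ and which supplies $m$ across the four $b$-ranges is the other place where care is needed.
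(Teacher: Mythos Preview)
Your first two steps follow essentially the same route as the paper: reduce admissibility to the density condition $c\ge0$, analyze it through the extrema of $g'(t)=1+2(b-1)t-3bt^{2}$ on $[0,1]$, and split into the four $b$-ranges. For the inclusion $\Omega\subset\{\text{admissible pairs}\}$ your argument is complete and in places cleaner than the paper's; in particular you are right that the boundary conditions in (\ref{eq(1.1)}), including $C\le1$, follow automatically from $g(0)=g(1)=0$ together with $c\ge0$, whereas the paper treats $C\le1$ as if it were an independent constraint and uses it to motivate the narrowing of the range.

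Your third step, however, rests on a misreading of the theorem. In the statement, $\Omega$ is \emph{defined} to be $\Omega^{+}\cup\Omega^{-}$, and the sentence preceding the theorem says explicitly that $\Omega$ is ``a subset of the admissible range''. Only one inclusion is being claimed, and that is all the paper proves. The opposite inclusion you try to establish---that every admissible $(a,b)$ lies in $\Omega$---is actually false. At $b=-1$ the copula becomes $C_{HK\text{-}FGM2}(u,v;a,2)=uv\bigl[1+a(1-u)^{2}(1-v)^{2}\bigr]$, which by (\ref{eq(1.6)}) and Proposition~1(3) is valid for every $a\in[-1,3]$, yet $\Omega^{-}$ records only $a\in[-1,0]$. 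Likewise, for $b>1$ the sharp density bound is $a\le\frac{1}{(1+b)M}$ with $M=1+\frac{(b-1)^{2}}{3b}$, strictly larger than the $\frac{1}{(1+b)^{2}}$ listed in $\Omega^{+}$. You should therefore drop the sharpness discussion entirely and, in the regimes $b>1$, $-2\le b\le0$, $b<-2$, simply note that the stated $\Omega$-bound sits inside the interval $\bigl[-\frac{1}{\max(m^{2},M^{2})},\,\frac{1}{-mM}\bigr]$ your density analysis already produced. The ``main obstacle'' you flagged then evaporates: nothing beyond the trivial inequality $\frac{1}{(1+b)^{2}}\le\frac{1}{(1+b)M}$ is required.
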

\begin{proof}
The admissible range $\Omega$ will be determined relying on the conditions  (\ref{eq(1.1)}) and (\ref{eq(1.3)}).
The corresponding copula density of  (\ref{eq(2.1)}) can be written in the form
\begin{eqnarray}\label{eq(2.3)}
c(u,v;a,b)=1+a f(u,b)f(v,b),
\end{eqnarray}
where $f(u,b)=3bu^{2}+2u(1-b)-1.$  Clearly, the condition (\ref{eq(1.1)}) is satisfied, when  $0<b<1.$ Furthermore, when  $0<b<1,$ in  view of  (\ref{eq(2.3)}), we get $f(0, b)=-1$ and $f(1, b)=3 b+2(1-b)-1=b+1>0.$
On the other hand, $f^{\prime}(u,b)=\frac{d\,f(u,b)}{d\,u}=6 b u+2(1-b)>0.$ Thus, $f(u,b)$
is strictly increasing. Moreover, $\min\limits_{0\leq u\leq1} f(u, b)=-1$ and $\max\limits_{0\leq u\leq1} f(u, b)=1+b>0.$ Therefore, in order that, the condition   (\ref{eq(1.3)}) is satisfied, i.e. $c(u,v;a,b)\geq0,$  we  must have
$$1+a~\min f(u,b).\min f(v,b)\geq0 \Longrightarrow a\geq-1, $$
$$1+a~\min f(u,b).\max f(v,b) \geq0 \Longrightarrow a\leq \frac{1}{1+b}, $$
$$1+a~\max f(u,b).\max f(v,b)\geq0 \Longrightarrow a\geq \frac{-1}{(1+b)^{2}}. $$
The above restrictions on $a$ imply that $ a \in\left[\frac{-1}{(1+b)^{2}},\frac{1}{1+b}\right].$ Bearing in mind that, $C(u,v;a,0)=C_{FGM}(u, v;a),-1\leq a\leq 1$ and $C(u,v;a,1)=C_{HK-FGM1}(u,v;a,2),-\frac{1}{4}\leq a\leq \frac{1}{2},$ we get $(a,b)\in \Omega^+,$ for $0\leq b\leq 1,$ and
$ a \in\left[\frac{-1}{(1+b)^{2}},\frac{1}{1+b}\right].$ Now, consider the case $b>1.$ Clearly, while for this case the condition (\ref{eq(1.3)}) is still satisfied, but the condition (\ref{eq(1.1)}) is not,  especially for large values of $b$ unless we alter the upper bound of $a$ to be $\frac{1}{(1+b)^{2}},$  note that in this case $\lim\limits_{b\to\infty}C(u,v;\frac{\alpha}{(1+b)^{2}},b)=uv[1+\alpha uv(1-u)(1-v)]:=C_{IFGM}(u, v;0,\alpha),$ where $-1\leq\alpha\leq1,$ and
$C_{IFGM}(u, v;a,b)$ is a single iterated FGM copula,
see Proposition 1. Thus, we proved that $\Omega^+\subset\Omega.$ We turn now to the case $b<0.$ When $-2\leq b\leq 0,$ by proceeding as we have done in the case $b> 0,$ we can check that the condition   (\ref{eq(1.3)}) is satisfied if
$a \in \left[-1,\frac{1}{1+b}\right],$ but the condition $a\leq\frac{1}{1+b}$ makes $C(u,v;a,b)>1,$ for some values of $b$ (when the  value of $b$ is  equal -1, or close to it). In order to fix  the upper bound of $a,$ we invoke the general relation
$C(u,v;a,b)\leq 1+a\leq 1,$
which enables us to alter the upper bound  $\frac{1}{1+b}$ by $0.$ Thus, $(a,b)\in\Omega^-,$ if $-2\leq b\leq0,$ and $-1\leq a\leq 0.$ Finally, when $b<-2,$ we can show that the condition (\ref{eq(1.3)}) is satisfied if
$a \in \left[\frac{-1}{(1+b)^{2}},\frac{1}{1+b}\right].$ On the other hand, in order that the condition (\ref{eq(1.1)}) is satisfied (especially for large $|b|$), we can alter the admissible range of  $a$ to  be $\left[\frac{-1}{(1+b)^{2}},\frac{1}{(1+b)^2}\right].$ This proves that $\Omega^-\subset\Omega,$
as required to complete the proof.
\end{proof}
\begin{rem} We can easily check that the improvement of the correlation attains only on $\Omega^+.$ Namely, at $b=1, a=\frac{1}{2},$ we get $\rho=0.375.$ On the other hand, no improvement can be gained on $\Omega^-,$ which coincides with the remark of Barakat et al. \cite{BarakatEtal21} that the copula (\ref{eq(1.7)}) does not improve the correlation.
\end{rem}
\begin{proposition}
It is easily to check the validity of the following relations and properties:
\begin{enumerate}
\item $C(u,v;a,0)=C_{FGM}(u, v;a),-1\leq a\leq 1,$
\item $C(u,v;a,1)=C_{HK-FGM1}(u, v;a,2),-\frac{1}{4}\leq a\leq \frac{1}{2},$
\item $C(u,v;a,-1)=C_{HK-FGM2}(u, v;a,2),-1\leq a\leq 3,$
\item $\lim\limits_{b\to\pm\infty}C(u,v;\frac{\alpha}{(1+b)^2},b)=C_{IFGM}(u, v;0,\alpha),$ where $ -1\leq \alpha\leq1$ and
\begin{eqnarray*}
C_{IFGM}(u, v;a,b)&=&uv\left\{1+a(1-u)(1-v)+buv(1-u)(1-v)\right\},\\
&-&1\leq a\leq 1, a+b\geq-1, b\leq\frac{3-a+\sqrt{9-6a-3a^2}}{2},\qquad\qquad\qquad\quad
\end{eqnarray*}
is the single iterated FGM copula, which was introduced by Huang and  Kotz \cite{HuangKotz84} and further studied by Alawady et al. \cite{AlawadyEtal20}, Barakat and Husseiny \cite{Barakat20}, and Barakat et al. \cite{Barakat20b,BarakatEtal20}.
\end{enumerate}
\end{proposition}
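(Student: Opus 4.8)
The plan is to verify the four items by direct substitution into the defining formula (\ref{eq(2.1)}), one elementary limit computation, and then to pin down each parameter interval by combining Theorem \ref{th2.1} with the admissible ranges of the parent copulas recalled in Section~1.

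For item~1, I would simply set $b=0$ in (\ref{eq(2.1)}): the factors $(1+bu)(1+bv)$ collapse to $1$ and (\ref{eq(2.1)}) becomes exactly (\ref{eq(1.4)}), while the slice $b=0$ of $\Omega^+$ in (\ref{eq(2.2)}) gives $-1\le a\le1$, the classical FGM range. For item~2, setting $b=1$ turns $(1-u)(1+u)$ into $1-u^2$ and $(1-v)(1+v)$ into $1-v^2$, so (\ref{eq(2.1)}) becomes $uv\{1+a(1-u^2)(1-v^2)\}$, which is precisely (\ref{eq(1.5)}) with exponent $b=2$; the slice $b=1$ of $\Omega^+$ yields $-\tfrac14\le a\le\tfrac12$, coinciding with the Huang--Kotz range $a\in[-1/b^2,1/b]$ at $b=2$. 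For item~3, setting $b=-1$ turns $(1-u)(1+bu)$ into $(1-u)^2$, etc., so (\ref{eq(2.1)}) becomes $uv\{1+a(1-u)^2(1-v)^2\}$, i.e.\ (\ref{eq(1.6)}) with exponent $b=2$; here the admissible interval is \emph{not} read off from $\Omega$ (which on $\{b=-1\}$ only supplies the sub-range $-1\le a\le0$), but from the known Huang--Kotz 2nd-type range $a\in[-1,((b+1)/(b-1))^{b-1}]$, which at $b=2$ equals $[-1,3]$.

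For item~4, I would substitute $a=\alpha/(1+b)^2$ in (\ref{eq(2.1)}) and observe
\[
\frac{(1+bu)(1+bv)}{(1+b)^2}=\frac{1+b(u+v)+b^2uv}{1+2b+b^2}\longrightarrow uv\qquad(b\to\pm\infty),
\]
so the pointwise limit of $C\bigl(u,v;\alpha/(1+b)^2,b\bigr)$ on $I^2$ is $uv\{1+\alpha uv(1-u)(1-v)\}$, which is exactly $C_{IFGM}(u,v;0,\alpha)$ in the notation of the Proposition (first iteration parameter $0$, second parameter $\alpha$). The constraint $-1\le\alpha\le1$ is the image of the slice $b>1$ of $\Omega^+$, namely $-1/(1+b)^2\le a\le 1/(1+b)^2$, under $a=\alpha/(1+b)^2$, and it is preserved in the limit; one then checks that $(a,b)=(0,\alpha)$ with $|\alpha|\le1$ satisfies the stated IFGM admissibility conditions $a+b\ge-1$ and $b\le(3-a+\sqrt{9-6a-3a^2})/2$ (the latter reading $\alpha\le3$), so the limiting object genuinely lies in the single iterated FGM family.

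The computations are routine; the points that need care are (i) recognizing that along the slices $b\in\{0,1,-1\}$ the correct admissible $a$-interval is the \emph{full} range of the corresponding parent family (FGM, HK-FGM1, HK-FGM2), which for $b=-1$ is strictly larger than the portion of $\Omega$ furnished by Theorem \ref{th2.1}, and (ii) noting that the pointwise limit in item~4 is automatically a copula, being identified with a member of the iterated FGM family, so no separate check of (\ref{eq(1.1)})--(\ref{eq(1.3)}) is needed. I expect step~(i) --- correctly separating ``admissible range'' from the smaller set $\Omega$ on which the correlation is improved --- to be the main conceptual, as opposed to computational, obstacle.
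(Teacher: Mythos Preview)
Your proposal is correct and follows exactly the direct-substitution route the paper intends: the paper gives no formal proof of the Proposition, merely asserting ``It is easily to check the validity of the following relations and properties,'' and the computation for item~4 already appears inside the proof of Theorem~\ref{th2.1}. Your treatment is in fact more careful than the paper's, particularly your observation in item~3 that the interval $-1\le a\le 3$ must be read off from the HK-FGM2 admissible range rather than from $\Omega$, a point the paper leaves implicit.
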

 

\begin{thebibliography}{99}
\bibitem{Abd ElgawadEtal20a}  Abd Elgawad, M. A., Alawady, M. A., Barakat, H. M. and Shengwu Xiong. (2020a). Concomitants of generalized order statistics from Huang-Kotz Farlie-Gumbel-Morgenstern bivariate distribution: some information measures. Bull. Malays. Math. Sci. Soc., 43: 2627-2645.

\bibitem{Abd ElgawadEtal20b}  Abd Elgawad, M. A., Barakat, H. M. and Alawady, M. A.
(2020b). Concomitants of generalized order statistics under the generalization of Farlie-Gumbel-Morgenstern type bivariate distributions. Bull. Iran. Math. Soc., 47: 1045-1068. https://doi.org/10.1007/s41980-020-00427-0.

\bibitem{Abd ElgawadEtal21a} Abd Elgawad, M. A., Barakat, H. M. and Alawady, M. A. (2021a). Concomitants of generalized order statistics from bivariate Cambanis family: Some information measures. Bull. Iranian. Math. Soc. https://doi.org/10.1007/s41980-021-00532-8.

\bibitem{Abd ElgawadEtal21b} Abd Elgawad, M. A., Barakat, H. M., Xiong, S., and Alyami, S. A. (2021b). Information measures for generalized order statistics and their concomitants under general framework from Huang-Kotz FGM bivariate distribution. Entropy, 23(3), 335. https://doi.org/10.3390/e23030335.

\bibitem{AlawadyEtal20} Alawady, M. A.,  Barakat, H. M., Shengwu Xiong and Abd Elgawad, M. A. (2020). Concomitants of generalized order statistics from iterated Farlie\textendash Gumbel\textendash Morgenstern type bivariate distribution. Comm. Statist.-Theory Meth., to appear. https://doi.org/10.1080/03610926.2020.1842452.

\bibitem{AlawadyEtal21a}    Alawady, M. A., Barakat, H. M., Shengwu Xiong and Abd Elgawad, M. A. (2021a). On concomitants of dual generalized order statistics from Bairamov-Kotz-Becki Farlie-Gumbel-Morgenstern bivariate distributions. Asian-European J. Math. https://doi.org/10.1142/S1793557121501850.

\bibitem{AlawadyEtal21b} Alawady, M. A., Barakat, H. M. and Abd Elgawad, M. A. (2021b). Concomitants of generalized order statistics from bivariate Cambanis family of distributions under a general setting. Bull. Malays. Math. Sci. Soc., to appear. https://doi.org/10.1080/03610926.2020.1842452.

\bibitem{BairamovK02} Bairamov, I. and  Kotz, S.  (2002). Dependence structure and symmetry of Huang-Kotz FGM distributions and their extensions. Metrika, 56 (1): 55-72. https://doi:10.1007/s001840100158.

\bibitem{Barakat20} Barakat, H. M. and Husseiny, I. A. (2021). Some information measures in concomitants of generalized order statistics under iterated Farlie-Gumbel-Morgenstern  bivariate type. Quaestiones Math.,  44(5): 581-598. https://doi.org/10.2989/16073606.2020.1729271.

\bibitem{BarakatEtal21} Barakat, H. M. Alawady, M. A. and  Abd Elgawad, M. A. (2021a). Correction to the paper \lq\lq A new extension of the FGM copula with an application in reliability\rq\rq $~$by Rasha Ebaid, Walid Elbadawy, Essam Ahmed and Abdalla Abdelghaly. Comm. Statist.-Theory Meth., to appear. https://doi.org/10.1080/03610926.2021.1879864

\bibitem{Barakat20b} Barakat, H. M., Nigm, E. M. and  Husseiny, I. A. (2020).  Measures of information in order statistics and their concomitants for the single iterated Farlie-Gumbel-Morgenstern bivariate distribution. Math. Popul. Studies, to appear. https://doi.org/10.1080/08898480.2020.1767926.

\bibitem{Barakat19a} Barakat, H. M., Nigm, E. M. and Syam, A. H. (2019). Concomitants of ordered variables from Huang-Kotz FGM type bivariate-generalized exponential distribution. Bull. Malays. Math. Sci. Soc., 42(1): 337-353. https://doi.org/10.1007/s40840-017-0489-5.

\bibitem{BarakatEtal20} Barakat, H. M., Nigm, E. M., Alawady, M. A. and Husseiny, I. A. (2021b). Concomitants of order statistics and record values from iterated FGM type bivariate-generalized exponential distribution. REVSTAT, 19(2): 291-307.

\bibitem{BekrizadehJ17} Bekrizadeh, H. and  Jamshidi, B. (2017). A new class of bivariate copulas: Dependence measures and properties. Metron 75 (1): 31-50. DOI: 10.1007/s40300-017-0107-1.

\bibitem{BekrizadehEtal12} Bekrizadeh, H.,  Parham,  G.  A. and  Zadkarmi, M. R.  (2012). The new generalization of Farlie-Gumbel-Morgenstern copulas. App. Math. Sci., 6 (71): 3527-3533.

\bibitem{Durante06} Durante, F. (2006). A new class of symmetric bivariate copulas. J. Nonparametric Statisti., 18 (7-8): 499-510. DOI: 10.1080/10485250701262242.

\bibitem{Ebaid20} Ebaid, R.,  Elbadawy, W.,  Ahmed, E. and  Abdelghaly, A. (2020).  A new extension of the FGM copula with an application in reliability. Comm. Statist. Theory and Meth., to appear. https://doi.org/10.1080/03610926.2020.1785501.

\bibitem{HuangKotz84} Huang, J. S. and  Kotz, S. (1984). Correlation structure in iterated Farlie-Gumbel-Morgenstern distributions.
Biometrika, 71(3): 633-636. https://doi.org/10.2307/2336577.

\bibitem{HuangK99} Huang, J. S. and  Kotz, S. (1999). Modifications of the Farlie-Gumbel-Morgenstern distributions. A
tough hill to climb. Metrika 49(2): 135-145. https://doi.org/10.1007/s001840050030.

\bibitem{Nelsen06} Nelsen, R. B. (2006). An Introduction to Copulas. 2nd ed. New York: Springer.
\end{thebibliography}
\end{document}